\newtheorem{definition}{Definition}
\newtheorem{assumption}{Assumption}
\newtheorem{lem}{Lemma}
\newtheorem{theorem}{Theorem}
\newtheorem{corollary}{Corollary}
\title{A Fully Homomorphic Encryption Scheme for Real-Time Safe Control}
\author{Pieter Stobbe$^{1}$, Twan Keijzer$^{1}$ and Riccardo M.G. Ferrari$^{1}$  
\thanks{$^{1}$Delft Center for Systems and Control, Delft University of Technology, The Netherlands.
        {\tt\small pj.stobbe@hotmail.com, \{t.keijzer, r.Ferrari\}@tudelft.nl}.}}
\begin{document}

\maketitle

\begin{abstract}  
Fully Homomorphic Encryption (FHE) has made it possible to perform addition and multiplication operations on encrypted data. Using FHE in control thus has the advantage that control effort for a plant can be calculated remotely without ever decrypting the exchanged information. FHE in its current form is however not practically applicable for real-time control as its computational load is very high compared to traditional encryption methods. In this paper a reformulation of the Gentry FHE scheme is proposed and applied on an FPGA to solve this problem. It is shown that the resulting FHE scheme can be implemented for real-time stabilization of an inverted double pendulum using discrete time control.
\end{abstract}

\section{Introduction}\label{sec:intro}
\noindent Cryptography has allowed for the development of control-systems, such as hydroelectric dams or energy grids at a regional level or higher, that must be securely monitored and controlled over long distances. Such spatially distributed systems require a remote connection between the plant actuators, sensors and the controller which can only be feasibly secured from intrusion via encryption. 

Currently, industrial control systems are secured by end-to-end encryption, utilising a mix of symmetric-key and public-key encryption schemes \cite{AES,RSA,EncInCrtl}. These methods are successful in securing low rate communication within large scale control systems. However, they are unsuitable for high sampling frequency feedback-control, as they require multiple encryption and decryption steps. These operations introduce time overhead, reducing the stability margin and possibly de-stabilizing the plant. Furthermore the decryption of data at the controller level means that these methods do not provide security if the controller itself is compromised.

Homomorphic encryption (HME) schemes present a solution to these problems. These schemes allow for multiplication and/or addition of encrypted numbers, thus removing the need for decryption and encryption at the controller level. There are two main types of HME: Partially (PHE) and Fully Homomorphic Encryption (FHE). PHE schemes support only multiplication or addition, whereas FHE schemes support both. The first HME scheme was RSA \cite{RSA}, followed by PHE schemes such as EL Gamal \cite{ElGamal} and Paillier \cite{Paillier}.

More recently lattice-based FHE schemes have been introduced in \cite{Gentry2009,gentry2013,Cheon2015}. For encrypted control this means these schemes allow for implementation of a broad range of feedback control. However, the high computational complexity of these lattice based schemes prevents them from being used in real-time on conventional hardware.

PHE schemes have also been proposed for control, such as in \cite{HME_networked_control} which proposes a combination of the El Gamal \cite{ElGamal} and RSA \cite{RSA} schemes. This control scheme, however, requires the controller state to be sent to the plant for decryption and re-encryption at each time step, adding additional overhead. More recently, \cite{RTS_HME_FPGA} has demonstrated direct feedback control with the PHE scheme from \cite{Paillier}. Due to the limited homomorphic properties of PHE, the controller used un-encrypted controller gains, posing a security risk.

However, recently more attention has been directed to FHE schemes for control, such as in \cite{PeriodicReset,Kim2016,Quant_proj,Kim2021,Chaher2021}. These schemes however still suffer from two problems. First of all, the representation of encrypted signals requires orders of magnitude more storage than the original plaintext. This means that, due to limited computation and bandwidth resources, real-time control with FHE is limited in complexity and update rate. In \cite{Kim2016}, a two-state LTI controller is implemented with an update rate of $2~$Hz while in \cite{Cheon2018} a direct feedback controller for high-level control of a drone reaches an update rate of $10~$Hz.

Secondly, PHE and FHE only allow for encryption of unsigned integers, whereas control requires the use of real numbers. For PHE and FHE the real numbers can be represented as unsigned integers through the Q format. One limitation that remains is that multiplications will shift the location of the decimal point, eventually leading to overflow. Under normal conditions, the decimal point can be shifted back with right hand bit-shifts. However, no FHE schemes currently support homomorphic right hand bit-shifts without excessive penalties on \emph{multiplicative depth}, which is defined as the maximum allowed number of consecutive multiplications. Lattice based encryption schemes support only a relatively shallow multiplicative depth, after which ciphers can no longer be correctly decrypted. Alternative solutions, such as periodic reset \cite{PeriodicReset} and scaling of the state space matrices \cite{Quant_proj} have been proposed to solve this problem. These methods however affect stability and performances, limiting applicability.

The problems of computational complexity and fixed precision have hindered the acceptance of FHE for real-time control. In this paper we propose an FHE scheme for real-time secure control implemented on an Field Programmable Gate Array (FPGA) which address this issue. The contributions of the paper are:
\begin{itemize}
    \item The Gentry's FHE scheme \cite{Gentry2009} has been reformulated with analytical operations that allows for more intuitive manipulation of the scheme.
    \item A so-called \emph{reduced cipher} is introduced via a change of representation of the original cipher to reduce the computational complexity of the FHE scheme.
    \item The FHE scheme is implemented on an FPGA for real-time control of an unstable plant to demonstrate the benefits of the novel \emph{reduced cipher}.
\end{itemize}
The resulting FHE scheme can be used in combination with existing control schemes based on additions and multiplications and, while an FPGA was chosen here, can also be implemented on any conventional hardware. Note that the encryption properties of Gentry's scheme \cite{Gentry2009} are unchanged by using the novel \emph{reduced cipher}.

In the following Section \ref{sec:prob} introduces the considered control setup and Gentry's FHE scheme. In Section \ref{sec:fpga} the \emph{reduced cipher} is introduced and its equivalence is proven. Section \ref{sec:results} shows results of implementing FHE on an FPGA for control of a inverted double pendulum. 

\subsection{Notation} \label{ssec:notation}
\noindent For a positive scalar $x$, we denote individual digits of its binary representation as $x^{[i]}$. That is, $x=\sum_{i=0}^\infty 2^ix^{[i]}$. For any $x\in\mathbb{N}$ we define $(x)^\ell=\sum_{i=0}^{\ell-1} 2^ix^{[i]}$ which are the $\ell$ least significant binary digits of $x$, such that if $x \leq q$ where $q = 2^\ell - 1$, then $(x)^\ell = x$ and if $x > q$, then $(x)^\ell \neq x$. We denote $[x]^\ell=[x^{[0]},\dots,x^{[\ell]}]$ as a vector whose elements are the binary  digits of $(x)^\ell$; $g = [2^0,\dots, 2^{\ell-1}]^\top$ and the set $\mathbb{Z}_q = \{0,\dots,q-1\}$, where $q \in \mathbb{N}$. We denote bit-shifts of a $x$ by $i$ bits as $x\ll i =2^ix$ and  $x\gg i =2^{-i}x$. These concepts can be extended to matrices $X\in \mathbb{N}^{n_1 \times n_2}$, where $(X)^\ell$, $[X]^\ell$, and bitshifts are applied element-wise. $G_{n} = I_{n} \otimes g$, while the encrypted version of a variable $x$ is denoted as $\textbf{E}(x)$. Finally, for a discrete time signal $x(k)$, where $k$ is the current time step, the short-hand notation $x^+ = x(k+1)$ is used.

\section{Problem Statement}\label{sec:prob}
\noindent This section will cover the control scenario in Section \ref{ssec:control_scen}, followed by section \ref{ssec:gentry}, which introduces Gentry's FHE scheme using the proposed novel, simplified notation.

\subsection{Control Scenario}\label{ssec:control_scen}
\noindent In this paper we consider a nonlinear plant of the form 
\begin{equation}\label{eq:generalplant}
    \begin{dcases}
        \dot{x}=f(x,u)+\xi,\\
        y=h(x,u)+\eta,
    \end{dcases}
\end{equation}
and a discrete time, dynamical linear controller of the form
\begin{equation}\label{eq:generalcontroller}
    \begin{dcases}
        \hat{x}^+ = g(\hat{x},u,y,L),\\
        u^+ = v(\hat{x}^+,K),
    \end{dcases}
\end{equation}
where $x \in \mathbb{R}^\rho$ is the state, $\hat{x} \in \mathbb{R}^\rho$ the state estimate, $u \in \mathbb{R}^\gamma$ the input and $y\in \mathbb{R}^\nu$ the output. $f(\cdot)$ and $h(\cdot)$ are the known state transition and output functions, and $\xi$ and $\eta$ represent external disturbances or model uncertainty. The controller consists of two parts: $g(\cdot)$ to obtain the next $\hat{x}$; and $v(\cdot)$ to obtain the new control input. In this paper we consider the plant is controlled by an encrypted version of the controller, which using notation from Section \ref{ssec:notation} is denoted by
\begin{equation}\label{eq:generalencryptedcontroller}
    \begin{dcases}
        \textbf{E}(\hat{x}^+) = \tilde{g}(\textbf{E}(\hat{x}),\textbf{E}(u),\textbf{E}(y),\textbf{E}(L))\,,\\
        \textbf{E}(u^+) = \tilde{v}(\textbf{E}(\hat{x}^+), \textbf{E}(K))\,.
    \end{dcases}
\end{equation}

\noindent The entire encrypted control loop is shown in Figure \ref{fig:setup}, which will be discussed in more detail in Section \ref{fig:setup}. To limit the scope of this paper to its focus of encryption, we make two assumptions on the unencrypted control. 
\begin{assumption}
The control law \eqref{eq:generalcontroller} can be constructed with addition, subtraction and multiplication operations only. This holds true for all linear control methods such as PID, state-feedback and LQR control \cite{control_theory_introduction_book}.
\end{assumption}

\begin{assumption}
The plant in Equation \eqref{eq:generalplant} is stabilised by the unencrypted controller \eqref{eq:generalcontroller}.
\end{assumption}

\subsection{Fully\,Homomorphic\,Encryption\,Scheme\,by\,Gentry\,et~al.}\label{ssec:gentry}
\noindent In this paper, Gentry's FHE scheme \cite{gentry2013} is adapted to become more computationally efficient. Gentry's FHE scheme consists of four procedures: Key generation, encryption, homomorphic operations, and decryption. Gentry introduced four functions to perform these procedures. These functions are defined using notation from Section \ref{ssec:notation} as follows:
\begin{definition}\label{def:funcs}
For any matrix $a\in\mathbb{N}^{N \times (n+1)}$, $b\in\mathbb{N}^{N\times N}$, and $c\in \mathbb{Z}_q^{n+1\times 1}$
\begin{align}
&\textrm{\textbf{BitDecomp}}(a) = [a]^\ell\label{eq:bitdecomp}\\
&\textrm{\textbf{BitDecomp}}^{-1}(b)= b \cdot G_{n+1} \label{eq:bitdecomp-}\\
    &\textrm{\textbf{Flatten}}(b) = [b \cdot G_{n+1} ]^\ell\label{eq:flatten}\\
    & \textrm{\textbf{PowersOf2}}(c) = c \cdot G_{n+1}^\top
\end{align}
\end{definition}
We are now ready to define the procedures used in \cite{gentry2013}.
\subsubsection*{Key Generation}
A public-private key pair is generated as follows: Pick parameters $m\in\mathbb{N}$, $n\in\mathbb{N}$ and $q\in\mathbb{N}$ based on the required security and precision respectively. The private key is $s = \left[1, -t\right]^\top$ where $t \in \mathbb{Z}^{1\times n}_q$ is sampled uniformly on the interval $[0,q-1]$. The public key is $A = \left[b, B \right]$ where $b = B \cdot t^\top + e$, each element of $B \in \mathbb{Z}^{m\times n}_q$ is sampled uniformly on the interval $[0,q-1]$, and each element of $e \in \mathbb{Z}^{m}_q$ is sampled from the $\chi_q$ distribution \cite{LWE_oded_regev}.

\subsubsection*{Encryption}
A message $\mu\in\mathbb{Z}_q$ can be encrypted as a cipher $C \in \mathbb{Z}_2^{N\times N}$ via the following relation
\begin{equation}\label{eq:orig_enc}
    \hspace{-0.15cm}\begin{aligned}
        C = &\textbf{Enc}(\mu) = \textbf{Flatten}(\mu \cdot I_N + \textbf{BitDecomp}(R\cdot A)) = \\
        &[(\mu \cdot I_N + [R\cdot A]^\ell)\cdot G_{n+1}]^\ell\,,
    \end{aligned}
\end{equation}
where $N=\ell(n+1)$ depends on the message size through $\ell=\lfloor\log_2(q)\rfloor + 1$ and each element of $R \in \mathbb{Z}^{N \times m}_2$ is sampled uniformly on the interval $[0,1]$. 
\subsubsection*{Decryption}
Ciphers are decrypted using the $\textbf{MPDec}$ algorithm as proposed in \cite{MPDec}:
\begin{equation}\label{eq:orig_dec}
\mu = \textbf{MPDec}((C\textbf{PowersOf2}(s))^\ell)
\end{equation}
The \textbf{MPDec} algorithm \cite{MPDec} uses the first $\ell$ elements of its input to retrieve $\mu$. 
Proof that the correct message is retrieved in this way can be found in \cite{gentry2013}.
\subsubsection*{Homomorphic Operations}
The homomorphic operations for ciphers $C_1=\textbf{Enc}(\mu_1)$ , $C_2=\textbf{Enc}(\mu_2)$ and scalar $\alpha$ are
\begin{equation}\label{eq:operation_types}
\begin{aligned}
    \textrm{Sum: } C_3 = &\textbf{Flatten}(C_1 + C_2) = \\ &[(C_1 + C_2) \cdot G_{n+1}]^\ell,\\
    \textrm{Product: } C_4 = &\textbf{Flatten}(C_1 \cdot C_2) =\\ 
    &[(C_1 \cdot C_2) \cdot G_{n+1}]^\ell,\\
    \textrm{Scalar product: } C_5 =& \textbf{Flatten}(\textbf{Flatten}(\alpha I_N) \cdot C_2) =\\ &[([(\alpha I_N) \cdot G_{n+1}]^\ell C_2) \cdot G_{n+1}]^\ell,\\
    \textrm{Scalar sum: } C_6 =& \textbf{Flatten}(\alpha I_N + C_2)= \\
    &[(\alpha I_N + C_2) \cdot G_{n+1}]^\ell.\\
\end{aligned}
\end{equation}
For these homomorphic operations it is proven that
\begin{equation}\label{eq:hom_oper}
    \begin{aligned}
        \mu_3 =&\mu_1 + \mu_2 &\Longleftrightarrow& \; \mu_3 =& \textbf{MPDec}(C_3)\,,\\
        \mu_4 =&\mu_1\mu_2 &\Longleftrightarrow& \; \mu_4 =& \textbf{MPDec}(C_4)\,,\\
        \mu_5 =&\alpha\mu_2 &\Longleftrightarrow& \; \mu_5 =& \textbf{MPDec}(C_5)\,,\\
        \mu_6 =&\alpha+\mu_2 &\Longleftrightarrow& \; \mu_6  =& \textbf{MPDec}(C_6)\,.\\
    \end{aligned}
\end{equation}

\subsection{FHE in Control}\label{ssec:FHE_in_control}
\noindent The Gentry FHE scheme \cite{gentry2013} has excellent theoretical properties, but there are two obstacles which, until now, have prevented implementation of the scheme in control. Firstly, any message $\mu\in \mathbb{Z}_q$ containing $\ell$ bits of information, when encrypted, becomes a cipher $C\in \mathbb{Z}^{N\times N}_2$ containing $N^2=(n+1)^2\ell^2$ bits of information. Therefore storage and transfer of ciphers requires more memory than unencrypted equivalents. The problem of size becomes even more pronounced when performing homomorphic operations. Direct implementation of homomorphic operations requires multiple steps in which intermediate ciphers can become as large as $\mathbb{Z}^{N \times N}_N$ containing $N^2(\lfloor\log_2(N)\rfloor+1)$ bits of information. 

Even more important than the strain on storage and communication, is the strain on the computational resources. For direct implementation of homomorphic addition, $N^2(n+2)-N(n+1)$ addition operations and $N^2(n+1)$ multiplication operations are needed, whereas its unencrypted equivalent requires only a single addition. In this paper a so-called \emph{reduced cipher} is introduced to reduce the computational load of FHE, allowing for faster update rates of control laws.

The second obstacle is the representation of real numbers with unsigned integers. To this end we employ the commonly used fixed precision representation called $Q$ format \cite{RTS_HME_FPGA}\footnote{We will be using the Q-notation as introduced by Texas-Instruments, which is used in code libraries such as the TMS320C64x+ IQmath.}. 
Alternatives using floating point numbers are currently being researched \cite{FHE_Float} but are not yet sufficiently mature. $Q$ format allows for representing a fixed accuracy number $\beta$ with an integer message $\mu\in\mathbb{Z}_p$ where
$\lfloor\log_2(p)\rfloor + 1 = m_q + n_q$ as
\begin{equation}\label{eq:qmn}
\begin{aligned}
    &\beta = -2^{m_q-1}\mu^{[m_q+n_q-1]} + \sum_{i=0}^{m_q+n_q-2} 2^{i-n_q} \mu^{[i]}\\
    &\mu = \left\{\begin{matrix}
    2^{n_q}\beta &\text{ if } \beta \geq 0\\
   -2^{m_q+n_q}+\lvert \beta \rvert 2^{n_q} &\text{ if } \beta <0
    \end{matrix}\right.
\end{aligned}
\end{equation}
such that $\beta$ can be any value in $[-2^{m_q-1},2^{m_q-1})$ rounded to the nearest $2^{-n_q}$. When performing multiplication of two messages $\mu_3= \mu_1\cdot \mu_2$, where $\mu_1$ and $\mu_2$ are obtained from Equation \eqref{eq:qmn}, the result has to fit a $m_q+2n_q$ sized register to yield an exact result. The available storage for each message is limited and so after a certain number of consecutive multiplications overflow would occur.

Therefore, conventionally, a right-bitshift by $n_q$ bits is performed after each multiplication such that the $m_q+n_q$ least significant bits of $\mu_3$ can be used to retrieve $\beta_1\beta_2$\footnote{rounded down to the nearest $2^{-n_q}$, due to truncation during the right hand bitshift.}. However, no HME scheme supports such operation on ciphers without penalty on multiplicative depth. Thus, consecutive multiplications have formed a great obstacle in HME. This problem is important for controllers, which often have internal states that are updated at each timestep without being decrypted. Until now this obstacle has been dealt with using a periodic reset \cite{RTS_HME_FPGA} or by transforming the state space variables \cite{Quant_proj}. These methods, however, affect the stability and performance of the controller such that direct implementation of FHE with existing control schemes is not possible.

\section{Reduced Ciphers for Fast FHE Implementation}\label{sec:fpga}
\noindent In this section the so-called \emph{reduced cipher} will be presented for computationally efficient implementation of FHE for discrete control. It will be shown that, with the \textit{reduced cipher}, encryption, homomorphic operations, and decryption can be made orders of magnitude more computationally efficient, enabling real-time implementation of FHE for control.

Given a cipher $C\in \mathbb{Z}^{N\times N}_2$, the so-called \textit{reduced cipher} will be denoted as $\tilde{C}\in \mathbb{Z}^{N\times (n+1)}_q$ and is defined as
\begin{equation*}
    \tilde{C} = \textbf{BitDecomp}^{-1}(C) = CG_{n+1}\,.
\end{equation*}
Here Definition \ref{def:funcs} is used to rewrite the relation between cipher and \textit{reduced cipher}. Note that the reduced \textit{cipher contains} exactly the same information as the original cipher. In Theorem \ref{thm:red_cipher} it will be shown that using the \textit{reduced cipher} reduces the total number of computer operations needed, and completely eliminates the need for doing hardware multiplications when performing homomorphic multiplication.

\begin{lem}\label{lem:red_cipher}
For any matrix $\Lambda\in\mathbb{N}^{n_1\times n_2}$, we have $\left[\Lambda\right]^\ell G_{n_2}=\left(\Lambda\right)^\ell$.
\end{lem}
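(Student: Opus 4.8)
The plan is to reduce the matrix identity to a scalar identity and then lift it to matrices by exploiting the Kronecker (block-diagonal) structure of $G_{n_2}=I_{n_2}\otimes g$. The entire content of the lemma is that multiplying a bit-decomposition by $g$ reconstructs the number, truncated to its $\ell$ least significant bits; the matrix statement is then just this scalar fact applied entry-by-entry, once the index bookkeeping is made to line up.

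First I would prove the scalar case: for any $x\in\mathbb{N}$, the row vector $[x]^\ell$ of binary digits of $(x)^\ell$ satisfies $[x]^\ell\,g=\sum_{i=0}^{\ell-1}2^i x^{[i]}=(x)^\ell$, which is immediate from the definitions of $g=[2^0,\dots,2^{\ell-1}]^\top$ and of $(\cdot)^\ell$ in Section \ref{ssec:notation}. This already explains why the right-hand side is $(\Lambda)^\ell$ rather than $\Lambda$: since $[x]^\ell$ retains only the $\ell$ least significant digits, any higher-order bits of $x$ present when $x>q=2^\ell-1$ are discarded, exactly as in the definition of $(x)^\ell$.

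Next I would handle the matrix case. Because $[\cdot]^\ell$ acts element-wise and expands each column into $\ell$ columns, the entries of $[\Lambda]^\ell$ in columns $(j-1)\ell+1,\dots,j\ell$ of row $p$ are precisely the digits $[\Lambda_{pj}]^\ell$. On the other side, $G_{n_2}=I_{n_2}\otimes g$ is block-diagonal, carrying the block $g$ in rows $(j-1)\ell+1,\dots,j\ell$ of its $j$-th column and zeros elsewhere. Hence the $(p,j)$ entry of the product $[\Lambda]^\ell G_{n_2}$ selects exactly the block product $[\Lambda_{pj}]^\ell\,g$, which by the scalar step equals $(\Lambda_{pj})^\ell$, the $(p,j)$ entry of $(\Lambda)^\ell$. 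Collecting over all $(p,j)$ yields the claim.

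The main obstacle is purely the index bookkeeping: one must verify that the column-expansion convention in the element-wise definition of $[\Lambda]^\ell$, where each scalar entry is replaced by its $\ell$-bit row in the same block position, is compatible with the block placement of $g$ induced by the Kronecker product $I_{n_2}\otimes g$. Once these two conventions are shown to align block-for-block, the vanishing off-diagonal blocks of $G_{n_2}$ guarantee that no cross-terms appear, and the identity collapses to the scalar computation; no genuine analytic difficulty remains.
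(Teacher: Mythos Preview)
Your proposal is correct and follows essentially the same approach as the paper: establish the scalar identity $[\alpha]^\ell\, g=(\alpha)^\ell$ directly from the definitions, then lift it to matrices entry-by-entry using the block-diagonal structure $G_{n_2}=I_{n_2}\otimes g$. Your write-up is more explicit about the index bookkeeping than the paper's terse two-line argument, but the underlying idea is identical.
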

\begin{proof}
First consider $\alpha \in \mathbb{N}$. for any $\alpha$ it holds
\begin{equation*}
    (\alpha)^\ell = \sum_{i=0}^{\ell-1} 2^i\alpha^{[i]} = \left[\alpha^{[0]}, \dots, \alpha^{[\ell-1]} \right] \cdot g = \left[ \alpha \right]^\ell \cdot g\,.
\end{equation*}
Then apply this relation on each element of $\Lambda$, giving \\
$ (\Lambda)^\ell = \left[\Lambda \right]^\ell \cdot I_{n_2} \otimes g = \left[\Lambda \right]^\ell \cdot G_{n_2}$
\end{proof}

\begin{theorem}\label{thm:red_cipher}
Given ciphers $C_1,C_2\in\mathbb{Z}^{N\times N}_2$ and scalar $\alpha\in\mathbb{Z}^q$ the existing homomorphic operations can equivalently be written using the \textit{reduced cipher} as
    \begin{align}
        &C_3=[(C_1+C_2)G_{n+1}]^\ell \leftrightarrow \tilde{C}_3 = \left(\tilde{C}_1+\tilde{C}_2\right)^\ell\label{eq:hom_add}\\
        &C_4=[(C_1\cdot C_2)G_{n+1}]^\ell \leftrightarrow \tilde{C}_4 = \left(C_1\cdot \tilde{C}_2\right)^\ell\label{eq:hom_mult}\\
        &\hspace{-0.06cm}C_5=[[\alpha G_{n+1}]^\ell\cdot C_1G_{n+1}]^\ell \leftrightarrow \tilde{C}_5 = (\left[\alpha G_{n+1}\right]^\ell \tilde{C}_1)^\ell\label{eq:hom_multc}\\
        &C_6=[(\alpha I_N+ C_1)G_{n+1}]^\ell \leftrightarrow \tilde{C}_6 = (\alpha G_{n+1} + \tilde{C}_1)^\ell\label{eq:hom_addc}
    \end{align}
\end{theorem}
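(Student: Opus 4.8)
The plan is to prove all four equivalences by a single uniform mechanism. The reduced cipher is defined by $\tilde{C}_i = \textbf{BitDecomp}^{-1}(C_i) = C_i G_{n+1}$, so to establish each equivalence I would take the original result $C_i$ from \eqref{eq:operation_types}, right-multiply it by $G_{n+1}$, and simplify until the claimed reduced-form expression in \eqref{eq:hom_add}--\eqref{eq:hom_addc} emerges. Thus the whole theorem reduces to verifying that $C_i G_{n+1}$ equals each right-hand side.

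The one tool that makes every case collapse is Lemma \ref{lem:red_cipher}. The key observation is that each original result has the outer form $C_i = [\Lambda_i]^\ell$, where $\Lambda_i$ is a matrix with $n+1$ block-columns because its rightmost factor is $G_{n+1}\in\mathbb{Z}^{N\times(n+1)}$. Applying the lemma with $n_2 = n+1$ then gives $\tilde{C}_i = [\Lambda_i]^\ell G_{n+1} = (\Lambda_i)^\ell$, which removes the outermost bit-decomposition $[\cdot]^\ell$ and replaces it with the truncation $(\cdot)^\ell$ appearing on every right-hand side.

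What remains in each case is routine distribution of the trailing $G_{n+1}$ across sums and products, reintroducing reduced ciphers through $\tilde{C}_j = C_j G_{n+1}$. For the sum I have $\Lambda_3 = (C_1+C_2)G_{n+1}$, so $(\Lambda_3)^\ell = (C_1 G_{n+1} + C_2 G_{n+1})^\ell = (\tilde{C}_1+\tilde{C}_2)^\ell$. For the product, associativity gives $(C_1(C_2 G_{n+1}))^\ell = (C_1\tilde{C}_2)^\ell$; here $C_1$ keeps its binary entries, so the surviving matrix product uses only additions and no hardware multiplications, which is the computational payoff. The scalar cases follow identically, absorbing the identity via $\alpha I_N G_{n+1} = \alpha G_{n+1}$.

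I expect no genuine obstacle. The only points requiring care are confirming that the truncation $(\cdot)^\ell$ produced by the lemma matches the truncation in each target expression, and that the block dimensions of $G_{n+1}$ line up so that distributivity and associativity of matrix multiplication apply cleanly. The substantive conceptual step -- collapsing an outer bit-decomposition against a right factor of $G_{n+1}$ -- was already discharged in Lemma \ref{lem:red_cipher}, leaving only algebraic bookkeeping downstream.
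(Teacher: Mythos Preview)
Your proposal is correct and follows essentially the same approach as the paper: right-multiply each $C_i$ by $G_{n+1}$, invoke Lemma~\ref{lem:red_cipher} to convert the outer $[\cdot]^\ell$ into $(\cdot)^\ell$, and then distribute $G_{n+1}$ through the inner expression to recover reduced ciphers. The paper's proof executes precisely these steps case by case, with the same use of $\alpha I_N G_{n+1} = \alpha G_{n+1}$ in the scalar cases.
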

\begin{proof}
Each equivalence is proven separately below.
\begin{equation*}
    \begin{aligned}
    \tilde{C}_3 = 
        &\left[(C_1 + C_2) G_{n+1}\right]^\ell G_{n+1} \\
        =&\left(C_1 G_{n+1} + C_2 G_{n+1}\right)^\ell
        =\left(\tilde{C}_1 + \tilde{C}_2\right)^\ell\\
    \tilde{C}_4 =
        &\left[(C_1 \cdot C_2) G_{n+1}\right]^\ell G_{n+1} =
        \left(C_1 \cdot \tilde{C}_2\right)^\ell \\
        \tilde{C}_5 =
        &\left[\left[\alpha I_N G_{n+1}\right]^\ell \cdot C_1 G_{n+1}\right]^\ell G_{n+1} \\
        =&\left(\left[\alpha G_{n+1} \right]^\ell \cdot \tilde{C}_1\right)^\ell
        \end{aligned}
        \end{equation*}
        \begin{equation*}
    \begin{aligned}
    \tilde{C}_6 = 
        &\left[(\alpha I_N+  C_1) G_{n+1}\right]^\ell G_{n+1}= \\
        &\left(\alpha G_{n+1} + C_1 G_{n+1}\right)^\ell =
        \left(\alpha G_{n+1} + \tilde{C}_1 \right)^\ell
    \end{aligned}
\end{equation*}
Here Definition \ref{def:funcs} and Lemma \ref{lem:red_cipher} were used.
\end{proof}
Theorem \ref{thm:red_cipher} has shown equivalences between homomorphic operations on ciphers and on \textit{reduced ciphers}. In the following corollaries it will be shown how these equivalences are used in encryption and decryption for the FHE scheme.
\begin{corollary}\label{cor:gamma}
The term $\alpha G_{n+1}$ from Theorem \ref{thm:red_cipher} can be generated using only bitshifts. Due to the structure of $\alpha G_{n+1}$ the number of operations needed to obtain $\tilde{C}_5$ and $\tilde{C}_6$ can be reduced to respectively $\mathcal{O}(n^2\ell^2)$ and $\mathcal{O}(n\ell)$.
\end{corollary}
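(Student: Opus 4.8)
The plan is to exploit the explicit Kronecker structure $G_{n+1}=I_{n+1}\otimes g$ together with $g=[2^0,\dots,2^{\ell-1}]^\top$. First I would observe that $\alpha I_N G_{n+1}=\alpha G_{n+1}=I_{n+1}\otimes(\alpha g)$, so that the only nonzero entries of the $N\times(n+1)$ matrix $\alpha G_{n+1}$ are the $N=(n+1)\ell$ values $\alpha\cdot 2^{r-1}=\alpha\ll(r-1)$ sitting in its diagonal blocks, one per row. Since each such entry is simply a left bitshift of $\alpha$, the whole matrix is assembled with bitshifts only and no hardware multiplication, which establishes the first claim of the corollary.

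For the scalar sum $\tilde{C}_6=(\alpha G_{n+1}+\tilde{C}_1)^\ell$, the governing fact is this sparsity: $\alpha G_{n+1}$ contributes a nonzero value to exactly one position in each of its $N$ rows. Hence $\alpha G_{n+1}+\tilde{C}_1$ modifies only those $N$ entries of $\tilde{C}_1$, and because the remaining entries of $\tilde{C}_1$ already lie in $\mathbb{Z}_q$, the truncation $(\cdot)^\ell$ need only be re-applied to the $N$ changed entries. This yields $\mathcal{O}(N)=\mathcal{O}(n\ell)$ additions and reductions, matching the stated bound.

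For the scalar product $\tilde{C}_5=([\alpha G_{n+1}]^\ell\tilde{C}_1)^\ell$, I would first show that element-wise bit-decomposition preserves the block pattern, namely $[\alpha G_{n+1}]^\ell=I_{n+1}\otimes M_\alpha$, where $M_\alpha=[\alpha g]^\ell\in\mathbb{Z}_2^{\ell\times\ell}$ is the bit-decomposition of the column $\alpha g$. This holds because the single nonzero block in each block-column of $\alpha G_{n+1}$ expands independently into the $\ell$ columns it spawns, while every off-diagonal block decomposes as $[0]^\ell=0$. The resulting block-diagonal form factors the product $(I_{n+1}\otimes M_\alpha)\tilde{C}_1$ into $(n+1)$ independent multiplications of an $\ell\times\ell$ block against an $\ell\times(n+1)$ row-block of $\tilde{C}_1$, each costing $\ell^2(n+1)$ operations, for a total of $\mathcal{O}((n+1)^2\ell^2)=\mathcal{O}(n^2\ell^2)$ — a factor-$n$ saving over the dense product cost $N^2(n+1)=\mathcal{O}(n^3\ell^2)$. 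Since $M_\alpha$ has entries in $\{0,1\}$, these are again additions rather than hardware multiplications, consistent with the spirit of Theorem \ref{thm:red_cipher}.

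The main obstacle I anticipate is making the structural identity $[\alpha G_{n+1}]^\ell=I_{n+1}\otimes M_\alpha$ fully rigorous: one must carefully track how the element-wise bit-decomposition of a sparse, Kronecker-structured matrix redistributes each column into $\ell$ new columns and verify that no cross-block coupling is introduced. Once this structural lemma is in place, both complexity estimates follow by counting the nonzeros of $\alpha G_{n+1}$ and of $I_{n+1}\otimes M_\alpha$, so the remaining steps reduce to routine bookkeeping.
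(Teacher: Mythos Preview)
Your proposal is correct and follows essentially the same route as the paper: both arguments rest on the Kronecker identity $\alpha G_{n+1}=I_{n+1}\otimes(\alpha g)$, observe that $\alpha g$ is built from $\ell$ bitshifts, and then count the $N=(n+1)\ell$ structural nonzeros to obtain $\mathcal{O}(N)$ for $\tilde{C}_6$ and $\mathcal{O}(N^2)$ for $\tilde{C}_5$. Your explicit block-diagonal factorization $[\alpha G_{n+1}]^\ell=I_{n+1}\otimes M_\alpha$ is simply a more detailed rendering of what the paper compresses into the phrase ``skipping the structural zeros''; the anticipated obstacle dissolves once you note that bit-decomposing a zero entry yields $\ell$ zero columns, so the off-diagonal blocks stay zero and no cross-block coupling can arise.
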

\begin{proof}
Denote $\alpha G_{n+1}=I_{n+1}\otimes \alpha g$. Note that $\alpha G_{n+1}$ contains $(n+1)$ instances of $\alpha g$, such that only $N=(n+1)\ell$ entries are non-zero. Therefore, by skipping the structural zeros, we require only $\mathcal{O}(N^2)$, $\mathcal{O}(N)$ operations respectively to obtain $\tilde{C}_5$ and $\tilde{C}_6$. Furthermore, $\alpha g$ can be generated using $\mathcal{O}(\ell)$ bitshifts as $\alpha g=[\alpha, \alpha \ll 1,\dots,\alpha \ll \ell-1]$.
\end{proof}
\begin{corollary}\label{cor:red_cipher}
Encryption and decryption can be rewritten in terms of \textit{reduced ciphers} using theorem \ref{thm:red_cipher}.
\end{corollary}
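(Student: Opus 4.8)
The plan is to produce, for each of the two remaining procedures of the scheme — encryption \eqref{eq:orig_enc} and decryption \eqref{eq:orig_dec} — an equivalent expression in which the cipher $C$ enters only through its reduced form $\tilde{C} = \textbf{BitDecomp}^{-1}(C) = C G_{n+1}$. The single tool driving the whole argument is Lemma \ref{lem:red_cipher}, i.e. the identity $[\Lambda]^\ell G_{n_2} = (\Lambda)^\ell$, exactly as it was used in the proof of Theorem \ref{thm:red_cipher}. So this corollary is a direct continuation of that proof, applied now to the two operations Theorem \ref{thm:red_cipher} did not cover.

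For encryption I would right-multiply the defining relation $C = [(\mu I_N + [R A]^\ell) G_{n+1}]^\ell$ by $G_{n+1}$ to form $\tilde{C}$. Applying Lemma \ref{lem:red_cipher} with $\Lambda = (\mu I_N + [R A]^\ell) G_{n+1}$ collapses the outer $[\cdot]^\ell G_{n+1}$ into $(\cdot)^\ell$, giving $\tilde{C} = \left((\mu I_N + [R A]^\ell) G_{n+1}\right)^\ell$. Distributing $G_{n+1}$ over the sum, using $I_N G_{n+1} = G_{n+1}$, and invoking the Lemma a second time on $[R A]^\ell G_{n+1} = (R A)^\ell$ then yields the reduced encryption $\tilde{C} = \left(\mu G_{n+1} + (R A)^\ell\right)^\ell$, which never forms the full $N\times N$ cipher.

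For decryption the key observation is that $\textbf{PowersOf2}(s)$ coincides with $G_{n+1} s$ (up to the transpose convention in Definition \ref{def:funcs}), since stacking the blocks $s_i\, g$ is precisely $(I_{n+1}\otimes g)\,s$. Hence $C\,\textbf{PowersOf2}(s) = C G_{n+1} s = \tilde{C} s$, and substituting into \eqref{eq:orig_dec} gives the reduced decryption $\mu = \textbf{MPDec}\!\left((\tilde{C} s)^\ell\right)$, again operating directly on the reduced cipher.

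The main obstacle is not the algebra, which is routine substitution, but the careful handling of two bookkeeping points. First, one must justify that $G_{n+1}$ may be pulled through $I_N$ and distributed across the sum \emph{before} the bit-truncation $(\cdot)^\ell$ is applied; this is exactly the interchange of linear-algebra manipulation and truncation that Lemma \ref{lem:red_cipher} licenses, so the step is sound but should be spelled out rather than assumed. Second, I would reconcile the transpose convention stated for $\textbf{PowersOf2}$ in Definition \ref{def:funcs} with the shape required for $C\,\textbf{PowersOf2}(s)$ to be the valid $N\times 1$ product equal to $\tilde{C} s$, so that the identity $\textbf{PowersOf2}(s) = G_{n+1} s$ is unambiguous. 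Once these are settled, the two reduced expressions follow immediately.
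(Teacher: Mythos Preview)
Your proposal is correct and follows essentially the paper's approach: both the encryption and decryption reductions are obtained by right-multiplying by $G_{n+1}$ and invoking Lemma~\ref{lem:red_cipher}, arriving at $\tilde{C}=(\mu G_{n+1}+RA)^\ell$ and $\mu=\textbf{MPDec}((\tilde{C}s)^\ell)$. The paper's only additional move is to observe that encryption is literally an instance of the scalar sum \eqref{eq:hom_addc} (with $[RA]^\ell$ playing the role of a cipher), so Theorem~\ref{thm:red_cipher} can be quoted directly rather than re-derived, and it simplifies your inner $(RA)^\ell$ to $RA$ under the outer truncation.
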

\begin{proof}
Encryption is performed using Equation \eqref{eq:orig_enc}, where $\textbf{BitDecomp}(RA)=[RA]^\ell\in\mathbb{Z}^{N\times N}_2$ is of the same form as a cipher. Encryption is thus a special case of the homomorphic scalar sum as defined in equation \eqref{eq:hom_oper}. Applying Equation \eqref{eq:hom_addc} and Lemma \ref{lem:red_cipher} to encryption yields
\begin{equation}\label{eq:new_enc}
    \tilde{C} = (\mu G_{n+1} + [R\cdot A]^\ell G_{n+1})^\ell = (\mu G_{n+1} + R\cdot A)^\ell
\end{equation}
Decryption can be rewritten using the novel notation as
$\mu = \textbf{MPDec}((CG_{n+1}s)^\ell)=\textbf{MPDec}((\tilde{C}s)^\ell)$
\end{proof}

\begin{table}[h]
    \centering
    \caption{Number of operations required for homomorphic operations.}
    \begin{tabular}{|m{1.6cm}|>{\centering}m{1cm}|>{\centering}m{1cm}|>{\centering}m{1.9cm}|>{\centering\arraybackslash}m{1cm}|}
    \hline
         \multirow{2}*{}& \multicolumn{2}{c|}{sum} & \multicolumn{2}{c|}{product}\\\cline{2-5}
         & Cipher & Red. Cipher & Cipher & Red. Cipher\\\hline
        Bit Operation & $ 0$ &$0$ & $\mathcal{O}(n^3\ell^3)$ &$\mathcal{O}(n^3\ell^2)$\\
        Addition & $\mathcal{O}(n^3\ell^2)$ &$\mathcal{O}(n^2\ell)$ & $\mathcal{O}(n^3\ell^3)$ &$\mathcal{O}(n^3\ell^2)$\\
        Multiplication & $\mathcal{O}(n^3\ell^2)$ &$0$ &$\mathcal{O}(n^3\ell^2)$ & $0$ \\
        Memory & $\mathcal{O}(n^2\ell^2)$ &$\mathcal{O}(n^2\ell^2)$ &$\mathcal{O}(n^2\ell^2\log(n\ell))$ & $\mathcal{O}(n^2\ell^2)$ \\\hline
        & \multicolumn{2}{c|}{scalar sum} & \multicolumn{2}{c|}{scalar product}\\\hline
        Bit Operation & $\mathcal{O}(n^3\ell^2)$ &$\mathcal{O}(\ell)$ & $\mathcal{O}(n^2\ell^3)$ &$\mathcal{O}(n^2\ell^2)$\\
        Addition & $\mathcal{O}(n^3\ell^2)$ &$\mathcal{O}(n\ell)$ & $\mathcal{O}(n^3\ell^3)$ &$\mathcal{O}(n^2\ell^2)$\\
        Multiplication & $\mathcal{O}(n^2\ell)$ &$0$ &$\mathcal{O}(n^3\ell^2)$ & $0$ \\
        Memory & $\mathcal{O}(n^2\ell^2)$ &$\mathcal{O}(n^2\ell^2)$ &$\mathcal{O}(n^2\ell^2\log(\ell))$ & $\mathcal{O}(n^2\ell^2)$ \\\hline
    \end{tabular}
    \label{tab:hom_oper}
\end{table}

The actual reduction of computational complexity obtained by using the \textit{reduced ciphers} for the homomorphic operations is summarized in Table \ref{tab:hom_oper}. The table shows the computational complexity and memory utilisation of the operations that are involved in evaluating the homomorphic operations from equation \eqref{eq:operation_types}, both with an without the use of \textit{reduced ciphers}.
The number of operations is reduced and homomorphic multiplication no longer requires multiplication of cipher elements. Furthermore, Table \ref{tab:hom_oper} shows the reduction in required memory for performing the operations without requiring intermediate reading and writing of memory.
Note however, that the reduced cipher only increases computational efficiency, but does not affect the encryption properties of Gentry's scheme. Furthermore, \textit{Reduced ciphers} contain the same amount of data as regular ciphers and so the communicational bandwidth required to transfer the ciphers is unchanged.

\section{Results on a Simulated Plant}\label{sec:results}
\begin{figure}[h]
    \centering
    \includegraphics[width = 1.04\linewidth]{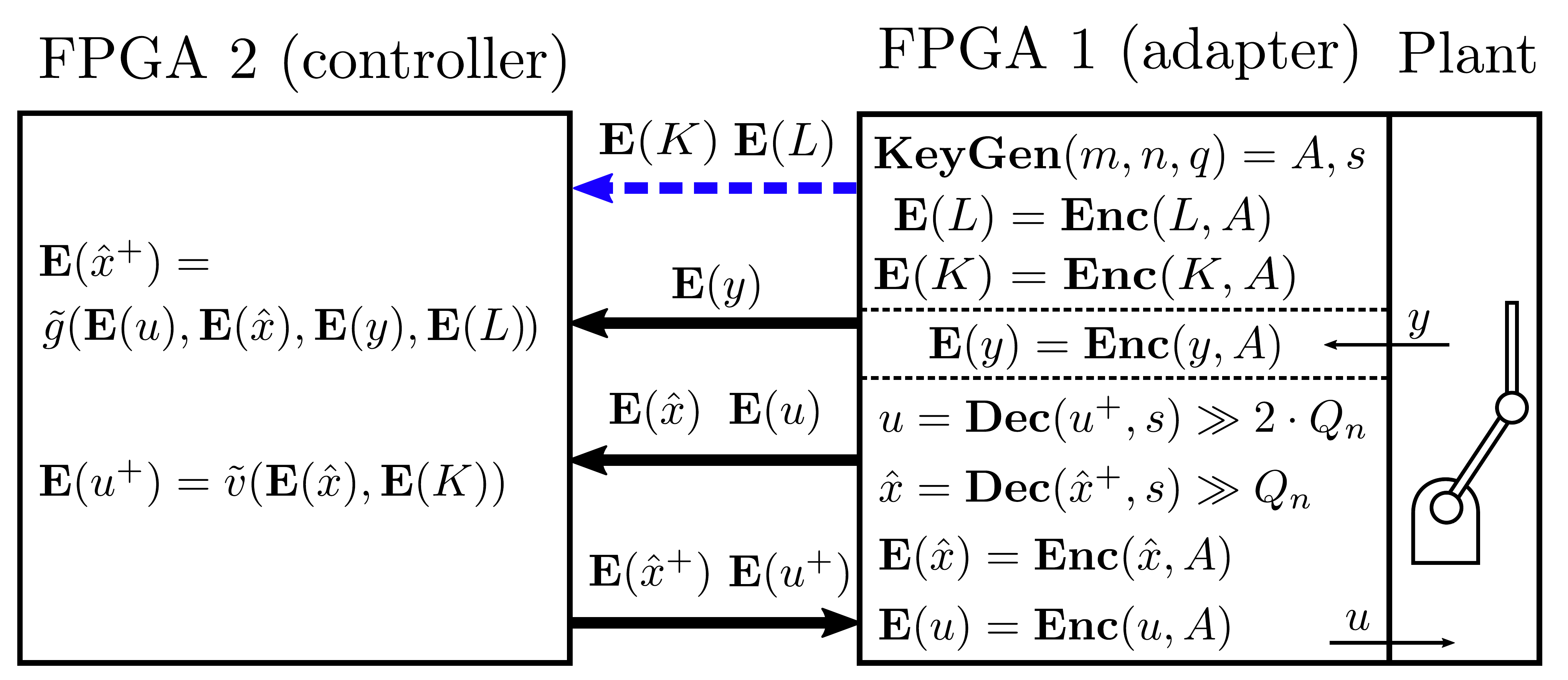}
    \caption{The experimental setup where FPGA 1 performs encryption and decryption at the plant and FPGA 2 contains the remote controller. The key exchange, as indicated with the dashed blue arrow, is only required at initialization.}
    \label{fig:setup}
\end{figure}

\noindent In this section we will apply the novel FHE scheme to the control of an inverted double pendulum. To achieve a realistic setup, the encrypted control is implemented on two FPGAs, as shown in Figure \ref{fig:setup}. It will be shown that it is possible to perform stabilising control of the unstable plant in real-time with the encrypted controller. Below, first the properties of an FPGA and the used setup will be discussed. Next the double pendulum model and control law will be introduced. Lastly, the obtained results will be shown.

\subsection{Hardware Resources of an FPGA}\label{ssec:fpga_hard}
\noindent An FPGA contains generic logic cells and memory components of differing sizes and configurability. The most common type of logic cell is called Adaptive Logic Modules (ALM), these can be configured to perform any operation. Though ALM's can be configured to perform multiplication, this would be very inefficient and so FPGA's are generally equipped with Digital Signal Processing (DSP) slices which are specifically made to perform multiplication. Unfortunately, due to the die space requirements, there are fewer available. To illustrate, on any particular FPGA, ALM's are usually available in the order of tens of thousands, whereas there are usually only DSP's available in the order of tens. If a design's speed relies on multiplication, the limited number of DSP-slices could bottle-neck the computational speed.

The \textit{reduced cipher} implementation as presented in Section \ref{sec:fpga} reduces the computational load of the scheme on any platform, however one aspect is particularly beneficial to FPGA design. As shown in Table \ref{tab:hom_oper}, the total number of operations is reduced by an order of magnitude when using the \textit{reduced cipher}. More importantly however, is that all multiplication operations are replaced by bit-operations and additions. Replacing all multiplications with bit-operations ensures the FPGA design will not be bottle-necked by the availability of DSP-slices.

\subsection{Experimental Setup}
\noindent The encrypted control scheme has been implemented in VHDL for use on two Nexys 4 FPGA's in the configuration as shown in Figure \ref{fig:setup}. The results are obtained from a hardware simulation of the FPGA coupled with a high resolution simulation of the double pendulum. In the following first the choice for FPGA's as hardware platform is argued. Then, the simulated plant and the corresponding controller is described.

FPGA's can be programmed to operate without the need for a software layer and so is the platform chosen for implementation. Furthermore, it can be seen from table \ref{tab:hom_oper} that multiplication operations, which are the most computationally expensive on an FPGA, are completely eliminated by using the \textit{reduced cipher}. With this implementation on an FPGA a new control input can be generated every $0.8~ms$, which would not have been possible using the original ciphers or on conventional hardware.

The chosen plant is the inverted double pendulum depicted in Figure \ref{fig:rot_pend}. The dynamics of the double pendulum's state $\theta=[\theta_1~\theta_2]^\top$ is modeled as
\begin{equation}\label{eq:pend_model}
\hspace{-8pt}
\begin{aligned}
    &\left\{
\begin{aligned}
    &M(\theta)\ddot{\theta}+C(\theta,\dot{\theta})\dot{\theta}+G(\theta)=T\,,\\
    &T+\tau_e \dot{T}=k_m u\\
\end{aligned}\right.\\%
&\begin{aligned}
    M(\theta)&=\left[ \begin{matrix}P_1+P_2+2P_3\cos{\theta_2} & P_2+P_3\cos{\theta_2}\\ P_2+P_3\cos{\theta_2}&P_2\end{matrix}\right]\\
    C(\theta,\dot{\theta}) &= \left[ \begin{matrix}b_1-P_3\dot{\theta}_2\sin{\theta_2}&-P_3(\dot{\theta}_1+\dot{\theta}_2)\sin{\theta_2}\\P_3\dot{\theta}_1\sin{\theta_2}&b_2\end{matrix}\right]\\
    G(\theta)&=\left[ \begin{matrix}-g_1\sin{\theta_1}-g_2\sin{(\theta_1+\theta_2)}\\
    -g_2\sin{(\theta_1+\theta_2)}\end{matrix}\right]
\end{aligned}\\
&\begin{aligned}
    P_1=&m_1c_1^2+m_2l_1^2+I_1, \hspace{12pt}  P_2=m_2c_2^2+I_2\\
    P_3=&m_2l_1c_2,             \hspace{12pt}  g_1=(m_1c_1+m_2l_1)g, \hspace{12pt} g_2=m_2c_2g
\end{aligned}
\end{aligned}
\end{equation}
where $\theta_1$ and $\theta_2$ denote the angles of the pendulum links as shown in Figure \ref{fig:rot_pend}. The system has the same form as \eqref{eq:generalplant}. Furthermore, $m_1$,$m_2$ are the masses of the links; $l_1$, $l_2$ are their lengths; $c_1$, $c_2$ are the centers of mass; $I_1$, $I_2$ are the mass moments of inertia; $b_1$,$b_2$ are the damping coefficients of the joints; $k_m$,$\tau_e$ are the electrical motor gain and time constant, and $g$ is the gravitational acceleration.
\begin{figure}[H]
    \centering
    \includegraphics[width = 0.5\linewidth]{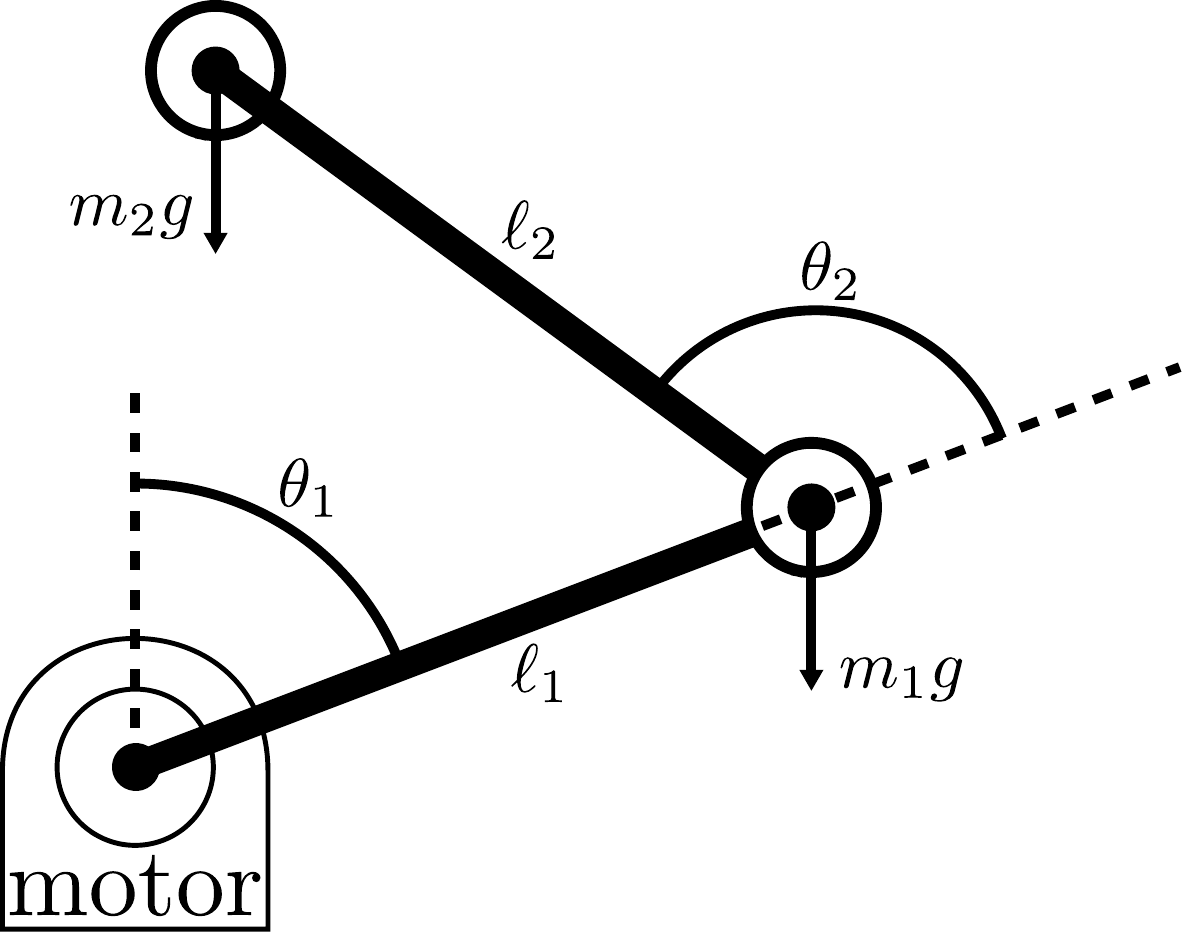}
    \caption{Double pendulum as modeled by Equation \eqref{eq:pend_model}.}
    \label{fig:rot_pend}
\end{figure}
\noindent The double pendulum is initialized at an initial state $\theta=\theta_0^\top = [0.0289, 0.1156]^\top$, $\dot{\theta}=\dot{\theta_0}^\top = [0.0669, 0.0049]^\top$ and $T = T_0 = 0$, and is controlled such that both pendulums point upwards, i.e. $\theta = [0~0]^\top$. A discrete time linearization of Model \eqref{eq:pend_model} can be made around $\theta = \dot{\theta} = [0~0]^\top$ as
\begin{equation}
\begin{aligned}
    \begin{dcases}
    x(k+1)=A_dx(k)+B_du(k)\,,\\
    y(k)=C_dx(k)\,,
\end{dcases}
\end{aligned}
\end{equation}
where $x(k)=[\theta_1(k)~\dot{\theta}_1(k)~\theta_2(k)~\dot{\theta}_2(k)~T]^\top$, $y(k)=[\theta_1(k)~\theta_2(k)]^\top$, and $A_d$, $B_d$, and $C_d$ are matrices of appropriate size. This linearized model is used to implement an observer and state feedback controller as
\begin{equation}\label{eq:pend_contrl}
    \hspace{-0.26cm}\begin{dcases}
        \hat{x}(k+1)=A_d\hat{x}(k)+B_du(k)+L(y(k)-C_d\hat{x}(k))\,, \\
        u(k+1)=K\hat{x}(k+1)\,,
    \end{dcases}
\end{equation}

\noindent where $L$ is the observer gain and $K$ is the state feedback gain. The controller takes the same form as \eqref{eq:generalcontroller}. The controller is updated at a rate of $f=100~Hz$. $L$ has been obtained by placing the observer poles at $[0.7~ 0.5~ 0.8~ 0.6~ 0.85]$ and $K=[-12.6~ -1.8~ -9.8~ -0.95~ 0.015]$. The input and the state estimate are initialized at $u(1)=0$ and $\hat{x}(0)=0$. The controller \eqref{eq:pend_contrl} is encrypted to obtain the equivalent controller $\tilde{g}(\cdot)$, $\tilde{v}(\cdot)$ of the form \eqref{eq:generalencryptedcontroller}. The model and encryption parameters used can be found in Table \ref{tab:params}.
\begin{table}[H]
    \centering
    \caption{Model and encryption parameters}
    \begin{tabular}{|c|c|c|c|}
    \hline
        Parameter & Value & Parameter & Value \\\hline
        $m_1$ & $0.125~kg$ & $m_2$ & $0.05~kg$\\
        $l_1$ & $0.1~m$ & $l_2$ & $0.1~m$\\
        $c_1$ & $-0.04~m$ & $c_2$ &$0.06~m$\\
        $I_1$ & $0.074~kgm^2$ & $I_2$ & $0.00012~kgm^2$\\
        $b_1$ & $4.8~kgs^{-1}$ & $b_2$ &$0.0002~kgs^{-1}$\\
        $k_m$ & $50~Nm$ & $\tau_e$ &$0.03~s$\\
        $g$ & $9.81ms^{-2}$ & $n$ & $7$\\
        $\ell$ & $64$ & $m$ & $7$\\
        $m_q$ & $10$ & $n_q$ & $22$\\
        $f$ & $100~Hz$ & &\\\hline
    \end{tabular}
    \label{tab:params}
\end{table}

The final control loop as shown in Figure \ref{fig:setup} works as follows: At boot-up FPGA 1, the adapter, generates an encryption key pair. FPGA 1 then encrypts the state space matrices and sends them to FPGA 2, the controller. Next the control loop starts. First, the adapter encrypts measurement vector $y$ and sends it to the controller \eqref{eq:generalencryptedcontroller} which computes $\textbf{E}(u^+)$ and $\textbf{E}(\hat{x}^+)$. To extend multiplicative depth and to prevent overflow, these signals are then sent to the adapter for decryption. This is a solution similar to that of \cite{HME_networked_control}. The control effort is applied to the plant, after which $u^+$ and $\hat{x}^+$ are bit-shifted, encrypted and sent back to the controller along with the new measurements $\mathbf{E}(y)$.

\subsection{Performance}
\noindent \noindent Figure \ref{fig:results} shows the results of system \eqref{eq:pend_model} being controlled according to $\tilde{g}(\cdot), \tilde{v}(\cdot)$. The encrypted observer estimates the states correctly and the plant is stabilized by the encrypted controller. Controlling the plant without encryption, i.e. according to \eqref{eq:pend_contrl}, yields identical results. This illustrates that the encrypted controller and unencrypted controller are indeed equivalent. 
\begin{figure}[H]
    \centering
    \includegraphics[width = 1\linewidth]{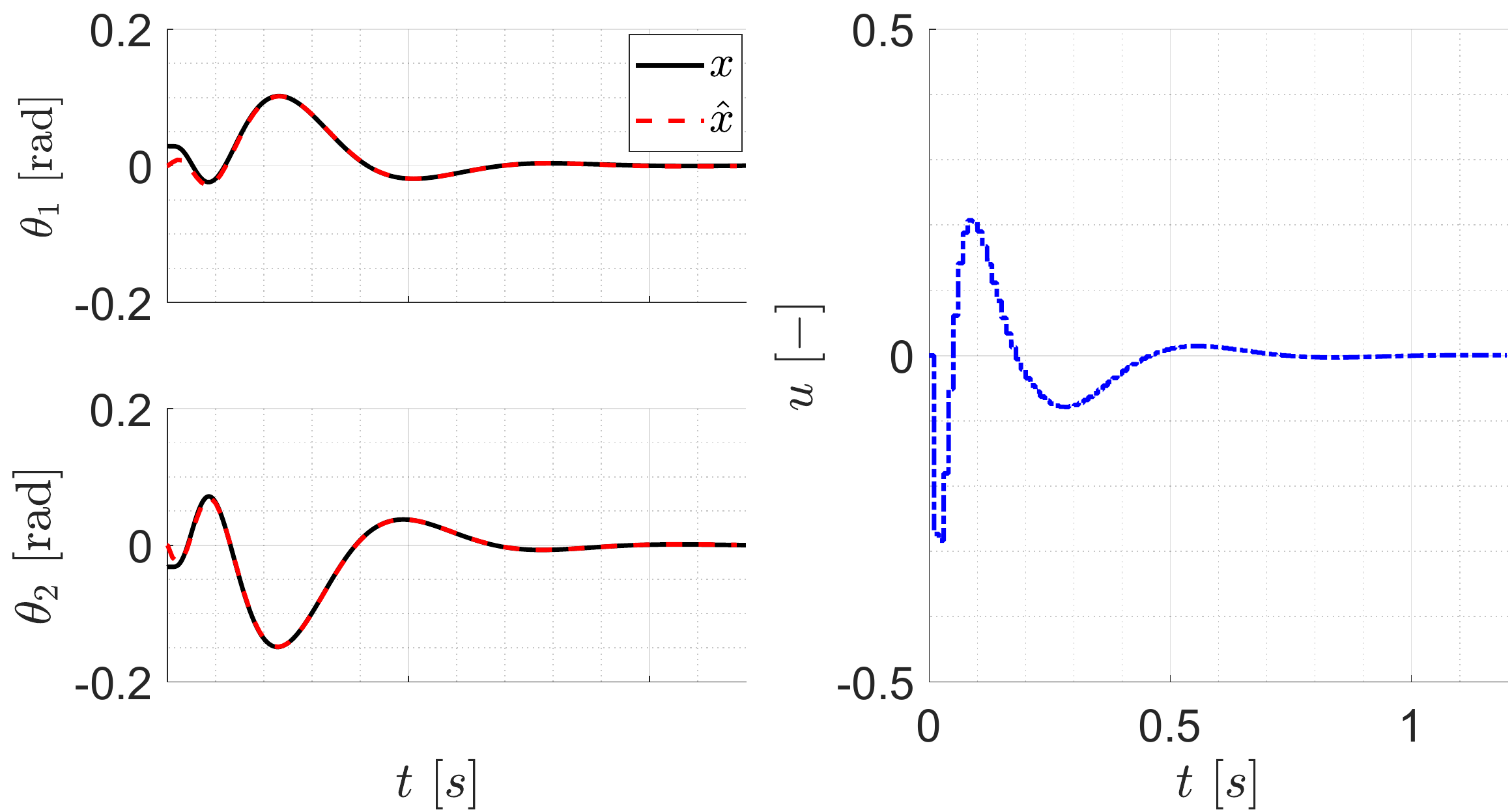}
    \caption{Simulation results, $\theta_1$, $\theta_2$ and control effort $u$}
    \label{fig:results}
\end{figure}
The experimental setup serves to highlight the contributions made to FHE. One can see that the plant is controlled towards an unstable equilibrium which requires a fast update rate of the encrypted controller. Due to the use of the \textit{reduced cipher}, this has become possible on the chosen hardware (Nexys 4 FPGA).

\section{Conclusion}\label{sec:concl}
\noindent The use of large scale systems such as hydroelectric dams or energy grids, has led to the need for secure monitoring and control over large distances. Securing such control systems from cyber-attacks is important to the safe operation. One of the ways to achieve this is through encryption. 

Using traditional encryption schemes, only the communication links can be secured, but signals have to be decrypted at the controller to calculate the control action. Fully Homomoprhic Encryption (FHE) has been developed such that operations can be performed on encrypted signals. Therefore, it has the potential to close the loop of encryption for secure control.
The main obstacle to widespread implementation of FHE in control is the high computational complexity. In this paper, the so-called \emph{reduced cipher} has been introduced, which allows for reducing the the computational complexity significantly. Specifically, the total number of operations performed is reduced by an order of magnitude. The \textit{reduced cipher} and analytical description of the encryption scheme are meant to enable more intuitive implementation and manipulation of the Gentry scheme for control purposes and extension of the capabilities of the scheme.

The presented FHE scheme is the first, to the best of the authors knowledge, that has been implemented for real-time control of an unstable plant. In future work we would like to extend the principle to more complex plants to show the full capability of the scheme. Furthermore, we will explore how to perform right hand bit shifts and other operations on encrypted data. This would be an elegant solution to the problem of shifting decimal points when multiplying fixed precision numbers and could enable to implement more complex control techniques.

\bibliographystyle{IEEEtran}
\bibliography{biblio}

\end{document}